\numberwithin{equation}{section}
\newtheorem{theorem}{Theorem}[section]
\newtheorem{proposition}[theorem]{Proposition}
\theoremstyle{remark}
\newtheorem{remark}[theorem]{Remark}
\newcommand\R{{\mathbb R}}
\newcommand\D{{\mathbb D}}
\newcommand\X{{\R^d}}
\newcommand\N{{\mathbb N}}
\newcommand\al{\alpha}
\newcommand\la{\lambda}
\newcommand\Ga{\Gamma}
\newcommand\ga{\gamma}
\newcommand\ka{\varkappa}
\newcommand\lu{\left\langle}
\newcommand\ru{\right\rangle}
\newcommand\llu{\lu\!\lu}
\newcommand\rru{\ru\!\ru}
\title{Fractional contact model  in the continuum
\thanks{The financial support of DFG through the SFB 701
(Bielefeld University) is gratefully acknowledged.}}
\author{Anatoly N. Kochubei\thanks{Institute of Mathematics,
National Academy of Sciences of Ukraine, Kyiv, Ukraine ({\tt
kochubei@imath.kiev.ua}).} \and Yuri G. Kondratiev\thanks{Fakult\"{a}t
f\"{u}r Mathematik, Universit\"{a}t Bielefeld, 33615 Bielefeld,
Germany ({\tt kondrat@math.uni-bielefeld.de})} }
\begin{document}

\maketitle
\begin{abstract}
We consider the evolution of correlation functions in a non-Markov version of the contact
model in the continuum. The memory effects are introduced by assuming the fractional evolution equation for the statistical dynamics. This leads to a behavior of time-dependent correlation functions, essentially different from the one known for the standard contact model.
\end{abstract}

\section{Statistical dynamics of continuous systems}

A traditional interpretation of a dynamics is given in terms of trajectories in a phase space of the considered system.
This may be formulated as an evolution of microscopic states of the system. However such microscopic dynamics creates naturally an associated evolution of functions on the phase space (observables) and
the dual evolution of probability measures on the phase space (macroscopic states). This is a classical
triple of evolution equations in the Hamiltonian systems theory where the Newton description of
particles dynamics may be reformulated by means of Liouville equations for observables or states.

If we are dealing with  Markov stochastic processes, then we have to use the language of random trajectories
for the stochastic dynamics in the phase space and there are two related evolution equations. Namely,
we have corresponding
dynamics of functions  given by the  Kolmogorov backward equation and
related dynamics of measures  given by the Kolmogorov forward or Fokker-Planck equation. In all cases,
we have evolution equations for probability measures on phase spaces.  These measures, or macroscopic states, describe statistical properties of considered systems and we will call corresponding evolution the statistical dynamics associated with the initial (random) dynamical system. This concept has sense even in the case when dynamics in terms of trajectories can not be constructed due a complicated character of considered models. We illustrate this situations in the framework of Markov evolutions of interacting particle systems in the continuum, see e.g. \cite{FKK12}.

The phase space of our systems  is the configuration space
\[
\Gamma :=\left\{ \left. \gamma \subset {\R}^{d}\right| \,|\gamma
\cap \Lambda |<\infty \,\,\mathrm{for\,\,any\,\,compact}\,\,\Lambda
\subset {\R}^{d}\right\} ,
\]
where $|A|$ denotes the cardinality of a set $A\subset \X$. This space has natural interpretation as a space
of counting measures on $\X$  and is endowed with the vague topology induced from the space of Radon measures on $\X$.

Let $L$ be a Markov pre-generator defined on some set of functions $\mathcal{F}(\Ga)$ given
on the configuration space
$\Ga$.
The backward Kolmogorov  equation on observables reads as:
\begin{equation}
\label{BKE}
\frac{\partial F_{t}}{\partial t}=\,LF_{t},
\end{equation}
$$
F_{t}|_{t=0}=F_{0}\in \mathcal{F}(\Ga).
$$
The duality between observables and states on $\Ga$ is given by

  $$\llu F,\,\mu\rru :=\int_{\Ga}Fd\mu . $$
Then the forward Kolmogorov, or   Fokker- Planck, equation is the dual one for (\ref{BKE}):
\begin{equation}
\label{FKE}
\frac{\partial \mu_{t}}{\partial
t}=\,L^{\ast}\mu_{t},
\end{equation}
$$
\mu_{t}|_{t=0}=\mu_{0}.
$$
To be able to construct a Markov process for the generator $L$ we need to define  transition probabilities
$P_t (\ga, d\ga')$ for this process which correspond to the solution to (\ref{FKE}) for the initial Dirac measures, i.e.,
$\mu_0(d\ga') = \delta_{\ga}(d\ga'),\;\;\ga \in \Ga$. There appears a principal difference comparing with the classical theory of Markov processes. Namely, we can hope to have the existence of measure evolutions in
(\ref{FKE}) only for special classes of initial states. The latter means that we may have a statistical dynamics for the considered system but a possibility to construct the stochastic evolution remains an open
problem.

Moreover, even in such a restricted approach an analysis of Fokker-Planck equations for measures on $\Ga$
is a difficult task and at the present time direct methods for the study of such kind  evolution equations
are absent.  There exists a useful technical possibility to transfer evolution problems for measures to their
characteristics such as correlation functions.

Let $f^{(n)}(x_1,\dotsc, x_n)$ be a compactly supported continuous symmetric function  on $\left({\R}^{d}\right)^n$  and $\mu$ be a  given probability measure on $\Ga$.
The correlation function $k_\mu^{(n)}$ of order $n\in\N$ for the measure $\mu$ is defined by the following
relation
\begin{multline*}
\int_\Gamma \sum_{\{x_{i_1},\dotsc,x_{i_n}\}\subset \gamma}
f^{(n)}(x_{i_1},\dotsc,x_{i_n})\,d\mu(\gamma) \\
=\frac{1}{n!} \int_{({\R}^d)^n} f^{(n)}(y_1,\dotsc,y_n)
k_\mu^{(n)}(y_1,\dotsc,y_n) \,dy_1\dotsm dy_n ,
\end{multline*}
and we put $k_\mu^{(0)}=1$.
Of course, the existence of correlation functions  needs certain a priori properties of the measure
$\mu$. On the other hand, correlation functions reconstruct  a measure under proper condition of their
positive definiteness and a bound on  growth, see  \cite{KK02}.

The Fokker-Planck equation (\ref{FKE}) in terms of time dependent correlation functions may be rewritten as an infinite system of evolution equations (hierarchical chain)

$$ \frac{\partial k_{t}^{(n)}}{\partial t} = (L^{\triangle} k_t)^{(n)}, n\geq 0 .$$
Here
 $L^{\triangle}$ is the image of $L^{*}$  in a Fock-type space
 of vector-functions $k_t= \left( k_{t}^{(n)}\right )_{n=0}^{\infty} $.
In applications to concrete models, the expression for operator $L^{\triangle}$ is  typically  easy to obtain from $L$
using  combinatoric calculations, c.f. \cite{FKO09}.

The approach to the construction
and study of statistical Markov dynamics described above was successfully applied in a number of interacting particle models, see, e.g.,  \cite{FKK12}. But there exists a possibility to apply similar approach to the case of non-Markov evolutions. Namely, in the Fokker-Planck evolution equation (\ref{FKE}) we may use the
Caputo--Djrbashian (CD)
fractional time derivative $\D^{\al}_t$  instead  of  the usual time derivative.
For functions $f:\R_+ \to \R$ the CD fractional derivative provides a fractional generalization of the first derivative through the following formula in the Laplace transform domain
$$
\left( \mathcal{L} \D^{\al}_t f\right)(s)= s^{\al} (\mathcal{L} f)(s) - s^{\al -1} f(0),\;\; s>0, \al\in(0,1].
$$
where
$$
(\mathcal{L} f)(s) =\int_0^{\infty}
e^{-st} f(t)dt.
$$
Another representation of the CD derivative is
$$
\D^{\al}_t f(t)= \frac{1}{\Ga (1-\al)} \frac{d}{dt} \int_0^t \frac{f(\tau)-f(0)}{(t-\tau)^\al } d\tau,\; 0<\al<1.
$$
If $f$ is absolutely continuous, then
$$
\D^{\al}_t f(t)= \frac{1}{\Ga (1-\al)}  \int_0^t \frac{f'(\tau)}{(t-\tau)^\al } d\tau,\; 0<\al<1.
$$
In this form the CD fractional derivative is often used in physical literature.

The definition above have natural extensions for vector-valued or measure-valued functions on $\R_+$.
 In the case $0<\al<1$ we shall expect that
the  fractional Fokker-Planck dynamics (if exists)  will act in the space of states on $\Ga$, i.e., will preserve probability measures on $\Ga$. Of course, this property shall be stated rigorously for each particular model under consideration.
From the analytic point of view, the fractional Fokker-Planck equation
\begin{equation}
\label{FFKE}
\D^{\al}_t \mu_t =\,L^{\ast}\mu_{t},
\end{equation}
$$
\mu_{t}|_{t=0}=\mu_{0}.
$$
describes a dynamical system with memory  in the space of measures on $\Ga$. The corresponding evolution has no longer the semigroup property but we still can try to construct the solution $\mu_t$ by means of a related
hierarchical chain:
$$
\D^{\al}_t k_{t}^{(n)} = (L^{\triangle} k_t)^{(n)}, n\geq 0 .
$$
In the present paper we illustrate this approach in the case of so-called contact model in the continuum.
This model was introduced in \cite{KS06} as a Markov birth-and-death process in $\Ga$ and studied in detail from the point of view of statistical dynamics in
\cite{KKP08}.
The Markov generator for this process on a proper set of observables $F:\Ga\to \R$ is given by
$$
(LF)(\ga)= \sum_{x\in\ga} [F(\ga \setminus \{x\})-F(\ga)] +
\ka \sum_{x\in \ga} \int_{\X} a(x-y) [F(\ga \cup \{y\}) -F(\ga)] dy,
$$
where $0\leq a\in L^{1}(\X)$ is even and $\varkappa >0$.
Roughly speaking, the contact process describes branching of points of a configuration in $\X$
in the presence of given constant mortality rate for these points which is independent on the existing
configuration. We will see that the use of fractional time derivative will not change the equilibrium states of the model but will lead to essentially different asymptotic for the time dependent correlation functions.

\section{Fractional Cauchy problem}
Let us consider the following Cauchy problem
\begin{equation}
\D^{(\alpha)} u = Au +f,\;\; t > 0,\;\; u(0)=x,
\label{CP}
\end{equation}
where $\D^{(\alpha)}$ is the CD fractional derivative, $0 <\alpha <1$, $A$  is a generator of a $C_0$-semigroup
$T(t)$ on a Banach
space $X$.
If $x\in X$,  and the vector-function
$$
t\ni t \mapsto \int_0^t f(s) ds
$$
is exponentially bounded, then a mild solution of (\ref{CP})  is given \cite{B00}, \cite{KLW13} by
\begin{equation}
u(t) = S_\alpha (t)x +\int_0^t P_\alpha (t-s) f(s) ds ,
\end{equation}
where
\begin{equation}
S_\alpha (t) x = \int_{0}^{\infty} \Phi_\alpha(\tau) T(\tau t^\alpha) x d\tau, \;\;t\geq 0;
\end{equation}
$$
P_\alpha (t) y = \alpha t^{\alpha - 1} \int_{0}^{\infty} \tau \Phi_\alpha (\tau) T(\tau t^\alpha) y d\tau, \;\;t\geq 0,\; x,y \in X.
$$

Here $\Phi_\alpha(z)$ is the Wright function:
$$
\Phi_\alpha(z) = \sum_{n=0}^\infty \frac{(-z)^n}{n! \Ga (-\alpha n +1 -\alpha)}.
$$
It is known (see \cite{GLM99}) that
$$
\Phi_\alpha(t) \geq 0,\;\;t >0;\;\;\; \int_0^\infty \Phi_\alpha(t)dt =1,
$$
$$
\int_0^\infty  t \Phi_\alpha(t) dt =\frac{1}{\alpha \Ga(\alpha)}.
$$
If $T$ is a contraction semigroup, then
$$
\|S_\alpha(t) x\| \leq \|x\|, \;\; \|P_\alpha (t)y\| \leq \frac{t^{\alpha - 1}}{\Ga (\alpha)} \|y\|.
$$
More generally, we have \cite{GLM99}
$$
\int_0^\infty  t^n  \Phi_\alpha(t) dt =\frac{n!}{ \Ga(1 +\alpha n)},\;\;n=0,1,2, \dots  .
$$
If
$$
\|T(t)\| \leq e^{\lambda t},
$$
then
$$
\|S_\al (t)\| \leq \int_0^\infty \Phi_\al (\tau) e^{\la t^\al \tau}d\tau =
\sum_{n=0}^\infty  \frac{ (\la t^\al )^n }{\Ga (1+\al n)} = E_\al (\la t^\al);
$$
$$
\|P_\al (t)\| \leq \al t ^{\al -1} \int_0^\infty \tau \Phi_\al (\tau) e^{\la t^\al \tau} d\tau
= \al t^{\al-1} \sum_{n=0}^\infty \frac{(\la t^\al)^n }{n!} \int_0^\infty \tau^{n+1} \Phi_\al (\tau) d\tau
$$
$$
= t^{\al -1} \sum_{n=0}^\infty \frac{(\la t^\al)^n}{\Ga(\al n +\al)}= t^{\al -1} E_{\al,\al}(\la t^\al).
$$
Here $E_{\al}$ and $E_{\al,\al}$ are the Mittag-Leffler function and generalized Mittag-Leffler function respectively:
$$
E_\al (z)= \sum_{n=0}^\infty  \frac{ z^n }{\Ga (1+\al n)},
$$
$$
E_{\al,\al} (z)= \sum_{n=0}^\infty  \frac{ z^n }{\Ga (\al+\al n)}.
$$
Here we follow notations from \cite{KLW13}, \cite{KST06}.

\section{Fractional contact model}

Let us consider the hierarchical chain of evolution equations for correlation functions of the contact model in which we use
the CD derivative in time instead of usual time derivative as in \cite{KKP08}. It leads to the following Cauchy problem:
\begin{equation}
\label{FracCP}
\D_t^{(\al)}k_{t}^{(n)}(x_{1},\ldots,x_{n})
={\widehat{L}_{n}^{\ast}}k_{t}^{(n)}(x_{1},\ldots,x_{n})+f_{t}^{(n)}(x_{1},\ldots,x_{n}),\;\;\;n\geq
1,
\end{equation}
\begin{equation}
\label{FracCP0}
k_{t}^{(n)}(x_{1},\ldots,x_{n})|_{t=0} = k_{0}^{(n)}(x_{1},\ldots,x_{n}),
\end{equation}
$$
0\leq k_{0}^{(n)}(x_{1},\ldots,x_{n})\leq C^n n!,\quad C\ge 1.
$$
Here
$$
{\widehat{L}_{n}^{\ast}}k_{t}^{(n)}(x_{1},\ldots,x_{n}):=-n\,k_{t}^{(n)}(x_{1},\ldots,x_{n})+
$$
$$
+\varkappa\sum_{i=1}^{n}\int_{\R^{d}}a(x_{i}-y)k_{t}^{(n)}(x_{1},\ldots,x_{i-1},y,x_{i+1},\ldots,x_{n})dy,\;\;\;n\geq
1
$$
and
$$
f_{t}^{(n)}(x_{1},\ldots,x_{n}):=\varkappa\sum_{i=1}^{n}k_{t}^{(n-1)}(x_{1},\ldots,\check{x_{i}},\ldots,x_{n})\sum_{j:\,j\neq
i}a(x_{i}-x_{j}),\;\;\;n\geq 2,
$$
$$
f_{t}^{(1)}\equiv 0.
$$
We assume that $0\leq a\in L^{1}(\X) \cap L^\infty(\X)$ is even and $\varkappa >0$.

Let $n\in\N$ be arbitrary and fixed. We consider the linear Cauchy
problem
\begin{equation}
 \D_t^{(\al)} k_{t}^{(n)}(x_{1},\ldots,x_{n})={\widehat{L}_{n}^{\ast}}k_{t}^{(n)}(x_{1},\ldots,x_{n})+f_{t}^{(n)}(x_{1},\ldots,x_{n}),\;\;t\geq
0,\label{Cauchy}
\end{equation}
$$
\left.k_{t}^{(n)}(x_{1},\ldots,x_{n})\right|_{t=0}:=k_{0}^{(n)}(x_{1},\ldots,x_{n}),
$$
in the Banach space $\mathrm{X}_{n}$ of bounded measurable functions on
$(\R^{d})^n$.

The operator ${\widehat{L}_{n}^{\ast}}$ in $\mathrm{X}_{n}$
can be written also in another way
$$
{\widehat{L}_{n}^{\ast}}k^{(n)}(x_{1},\ldots,x_{n})=n(\varkappa-1)\,k^{(n)}(x_{1},\ldots,x_{n})+
\sum_{i=1}^{n}L_{a}^{i}k^{(n)}(x_{1},\ldots,x_{n}),
$$

where for each $\,1 \leq i \leq n,$
$$
L_{a}^{i}k^{(n)}(x_{1},\ldots,x_{n})=
$$
$$
=\varkappa\int_{\R^{d}}a(x_{i}-y)
\left[k^{(n)}(x_{1},\ldots,x_{i-1},y,x_{i+1},\ldots,x_{n})-k^{(n)}(x_{1},\ldots,x_{n})\right]dy
$$
is a generator of a Markov process on $(\R^{d})^n$ which describes the jump of the particle placed at
the point $(x_{1},\ldots,x_{i},\ldots,x_{n})\in(\R^{d})^n$ into the
point $(x_{1},\ldots,y,\ldots,x_{n})\in(\R^{d})^n$ with intensity
equal to $\,a(x_{i}-y)$.

Because $a\in {L}^{1}(\R^{d})$,
for any $n\geq 1$ the operator ${\widehat{L}_{n}^{\ast}}$ is a
bounded linear operator on
$\mathrm{X}_{n}$ (as well as on ${L}^{1}((\R^{d})^{n})$).
Moreover, for each $\,1 \leq i \leq n,$ the operator $L_{a}^{i}$  acting in variable $x_i$ is
a generator of a contraction semigroup on the space of bounded measurable functions (and on $L^1(\X)$); see \cite{KKP08}.
Then $\bigotimes_{i=1}^{n}e^{tL_{a}^{i}}$ is a contraction semigroup on 
$\mathrm{X}_{n}$ (and  on ${L}^{1}((\R^{d})^{n})$).

Let
$$
T^{(n)}(t) =e^{n(\varkappa - 1)t} \bigotimes_{i=1}^{n}e^{tL_{a}^{i}}.
$$
We have $\|T^{(n)}(t)\| \leq e^{n(\varkappa - 1)t}$. By (2.2), the solution to  (\ref{FracCP}), (\ref{FracCP0})
is
$$
k_t^{(n)} (x_1,\dots, x_n)= (S_{\al}^{(n)}(t)k_0^{(n)})(x_1,\dots,x_n)+
$$
$$
\int_0^t P_\al^{(n)}(t-s)\left[ \varkappa \sum_{i=1}^{n}  k_{s}^{(n-1)} ( \dots,  \check{x_i}, \dots) \sum_{j\neq i} a(\cdot_{i} -\cdot_{j})\right]
ds
$$
(for $n=1$ the second summand is absent).
Here $S_\al^{(n)}$ and $P_\al^{(n)}$ correspond as above to the semigroup $T^{(n)}$.

\section{Correlation functions bounds}

Our estimates will be based on the identity (\cite{D93}, page 2)
\begin{equation}
\label{Dj}
\int_0^t (t-\tau)^{\al -1} E_{\al,\al}(z(t-\tau)^\al) E_\al(\la\tau^\al) d\tau= \frac{E_\al(z t^\al) - E_\al(\la t^\al) }{z-\la}.
\end{equation}

Denote $A=\max \{ 1,\sup\limits_{x\in \R^d}a(x)\}$.

\begin{proposition}
If $\varkappa > 1$, then
\begin{multline}
\label{>1}
k_t^{(n)}(x_1,\dots,x_n) \\
\leq  E_\al (n(\ka -1)t^\al)\left[ C^n n!  + C^{n-1}  n! \sum_{j=0}^{n-2} \left(\frac{\ka A}{\ka - 1}\right)^{j+1}
 (n-1)\dots(n-j -1)\right] .
 \end{multline}
The second term is absent for $n=1$.

\end{proposition}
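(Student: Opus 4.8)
The natural approach is induction on $n$, using the explicit mild-solution formula for $k_t^{(n)}$ together with the operator-norm bounds $\|S_\al^{(n)}(t)\|\le E_\al(n(\ka-1)t^\al)$ and $\|P_\al^{(n)}(t)\|\le t^{\al-1}E_{\al,\al}(n(\ka-1)t^\al)$ that follow from Section~2 applied to the semigroup $T^{(n)}$ with growth bound $\la = n(\ka-1)$. The base case $n=1$ is immediate: $f_t^{(1)}\equiv 0$, so $k_t^{(1)} = S_\al^{(1)}(t)k_0^{(1)}$ and $0\le k_0^{(1)}\le C\cdot 1!$ gives $k_t^{(1)}(x)\le E_\al((\ka-1)t^\al)\,C$, which is exactly the claimed bound with the second term absent. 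For the inductive step, I would first bound the inhomogeneous term $f_s^{(n)}$ pointwise: since $a\le A$ and $k_s^{(n-1)}$ satisfies the induction hypothesis, each summand $k_s^{(n-1)}(\dots,\check{x_i},\dots)\sum_{j\neq i}a(x_i-x_j)$ is at most $(n-1)A$ times the $n-1$ bound, so $\|f_s^{(n)}\|_\infty \le \ka(n-1)A\cdot E_\al((n-1)(\ka-1)s^\al)\,B_{n-1}$, where $B_{n-1}$ denotes the bracketed constant in the $(n-1)$-estimate (the $E_\al$ factor pulled out). The key point for applying~(4.1) cleanly is to be slightly wasteful and replace $E_\al((n-1)(\ka-1)s^\al)$ by $E_\al(n(\ka-1)s^\al)$; alternatively one keeps $\la=(n-1)(\ka-1)$ and $z = n(\ka-1)$ distinct in~(4.1), which is actually sharper. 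I expect the cleaner route is to keep them distinct.

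Next I would insert these bounds into the Duhamel formula. The homogeneous part contributes $E_\al(n(\ka-1)t^\al)\,C^n n!$, which is the first term of~(4.1). For the convolution part, using $\|P_\al^{(n)}(t-s)\|\le (t-s)^{\al-1}E_{\al,\al}(n(\ka-1)(t-s)^\al)$ and the $f$-bound above, I get a factor
\[
\ka(n-1)A\, B_{n-1}\int_0^t (t-s)^{\al-1} E_{\al,\al}\bigl(n(\ka-1)(t-s)^\al\bigr) E_\al\bigl((n-1)(\ka-1)s^\al\bigr)\,ds .
\]
By Djrbashian's identity~(4.1) with $z=n(\ka-1)$, $\la=(n-1)(\ka-1)$, this integral equals
\[
\frac{E_\al(n(\ka-1)t^\al) - E_\al((n-1)(\ka-1)t^\al)}{(\ka-1)} \le \frac{E_\al(n(\ka-1)t^\al)}{\ka-1},
\]
using monotonicity of $E_\al$ on $[0,\infty)$ for $\al\in(0,1)$ to drop the subtracted term. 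So the convolution part is bounded by $\dfrac{\ka A (n-1)}{\ka-1}\,B_{n-1}\,E_\al(n(\ka-1)t^\al)$.

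It remains to check that $C^n n! + \dfrac{\ka A(n-1)}{\ka-1}B_{n-1}$ equals the bracket $B_n$ in~(4.1). Writing out $B_{n-1} = C^{n-1}(n-1)! + C^{n-2}(n-1)!\sum_{j=0}^{n-3}\bigl(\tfrac{\ka A}{\ka-1}\bigr)^{j+1}(n-2)\cdots(n-j-2)$ and multiplying by $\tfrac{\ka A(n-1)}{\ka-1}$, the first piece becomes $C^{n-1}(n-1)!\,(n-1)\tfrac{\ka A}{\ka-1}$ — the $j=0$ term of the sum defining $B_n$ (note $C^{n-1}n! = C^{n-1}(n-1)!\,n$ versus what is needed; here one checks the index bookkeeping carefully, since the paper writes $C^{n-1}n!$ as the overall prefactor of the sum). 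The remaining pieces reindex to produce the $j=1,\dots,n-2$ terms, the products $(n-1)\cdots(n-j-1)$ appearing because each level contributes one more descending factor. This final algebraic identification is routine but is the step most prone to off-by-one errors, so I would carry it out by comparing the sums term by term in $j$. The main obstacle, then, is not any single hard estimate — all the analytic input is packaged in Section~2 and in identity~(4.1) — but rather arranging the inductive bookkeeping so that the $z$ and $\la$ in~(4.1) are chosen to make the telescoping collapse to exactly the stated combinatorial coefficients.
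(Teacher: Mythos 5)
Your overall strategy --- induction on $n$, the Duhamel/mild-solution formula with the Section~2 bounds $\|S_\al^{(n)}(t)\|\le E_\al(n(\ka-1)t^\al)$, $\|P_\al^{(n)}(t)\|\le t^{\al-1}E_{\al,\al}(n(\ka-1)t^\al)$, and Djrbashian's identity~(\ref{Dj}) with the two Mittag--Leffler arguments kept distinct so that the integral collapses to $\frac{1}{\ka-1}E_\al(n(\ka-1)t^\al)$ --- is exactly the paper's proof. The one genuine gap is in your bound on the inhomogeneity, and it is precisely the source of the factor-of-$n$ discrepancy you flag at the end but leave unresolved. You write
$\|f_s^{(n)}\|_\infty \le \ka(n-1)A\,E_\al((n-1)(\ka-1)s^\al)\,B_{n-1}$,
but $f_s^{(n)}=\ka\sum_{i=1}^{n}k_s^{(n-1)}(\dots,\check{x_i},\dots)\sum_{j\neq i}a(x_i-x_j)$ is a sum of $n$ terms, each bounded by $(n-1)A$ times the $(n-1)$-level bound; the correct estimate is
$\|f_s^{(n)}\|_\infty \le \ka\,n(n-1)A\,E_\al((n-1)(\ka-1)s^\al)\,B_{n-1}$.
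That extra $n$ is exactly what converts the leading piece $\frac{\ka A(n-1)}{\ka-1}\,C^{n-1}(n-1)!$ of your convolution bound into $C^{n-1}n!\left(\frac{\ka A}{\ka-1}\right)(n-1)$, i.e.\ the $j=0$ term of the sum in the bracket for level $n$, and likewise promotes the reindexed $j\ge 1$ terms to carry the prefactor $C^{n-2}n!\le C^{n-1}n!$ (here one uses $C\ge 1$). With that correction the term-by-term comparison you postponed goes through with no residue, and the proof is complete.

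Two minor points: to drop the subtracted term in $\frac{E_\al(zt^\al)-E_\al(\la t^\al)}{z-\la}$ you only need $E_\al\ge 0$ on $[0,\infty)$ (nonnegative Taylor coefficients), not monotonicity; and your ``wasteful'' alternative of replacing $E_\al((n-1)(\ka-1)s^\al)$ by $E_\al(n(\ka-1)s^\al)$ would make $z=\la$ in~(\ref{Dj}), where the identity degenerates, so the route you chose (distinct arguments, as in the paper) is indeed the right one.
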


\begin{proof}

 We show (\ref{>1}) by induction using the fact that due to (\ref{Dj})
$$
\int_0^t (t-\tau)^{\al -1} E_{\al,\al}((n+1)(\ka -1)(t-\tau)^\al) E_\al(n(\ka -1)\tau^\al) d\tau=
$$
$$
\frac{E_\al((n+1)(\ka -1)t^\al) - E_\al(
n(\ka -1)t^\al) }{\ka -1} \leq \frac{1}{\ka -1} E_\al ((n+1)(\ka-1)t^\al).
$$

Indeed, the inequality (\ref{>1}) is obvious for $n=1$. Suppose it has been proved for some value of $n$. Then it follows from the estimates for $S_\alpha^{(n)}$ and $P_\alpha^{(n)}$ that

\begin{multline*}
k_t^{(n+1)}(x_1,\dots,x_{n+1}) \leq C^{n+1}(n+1)!E_\al ((n+1)(\ka-1)t^\al)\\
+C^n(n+1)!\ka An\int\limits_0^t (t-\tau)^{\alpha -1}E_{\al,\al}((n+1)(\ka -1)(t-\tau)^\al)E_\al(n(\ka -1)\tau^\al) d\tau \\
+C^n(n+1)!\ka An\sum\limits_{j=0}^{n-2}\left( \frac{\ka A}{\ka -1}\right)^{j+1}(n-1)(n-2)\cdots (n-j-1)\\
\times \int\limits_0^t (t-\tau)^{\alpha -1}E_{\al,\al}((n+1)(\ka -1)(t-\tau)^\al)E_\al(n(\ka -1)\tau^\al) d\tau \\
\le E_\al((n+1)(\ka -1)t^\al)\Biggl[ C^{n+1}(n+1)! +C^n(n+1)!\frac{\ka An}{\ka -1} \\
+C^n(n+1)!\sum\limits_{j=0}^{n-2}\left( \frac{\ka A}{\ka -1}\right)^{j+2}n(n-1)(n-2)\cdots (n-j-1)\Biggr] .
\end{multline*}

Making the change $i=j+1$ of the summation index we see that
\begin{multline*}
\sum\limits_{j=0}^{n-2}\left( \frac{\ka A}{\ka -1}\right)^{j+2}n(n-1)(n-2)\cdots (n-j-1)\\
=\sum\limits_{i=1}^{(n+1)-2}\left( \frac{\ka A}{\ka -1}\right)^{i+1}[(n+1)-1]\cdots [(n+1)-i-1]
\end{multline*}
while the missing summand corresponding to $i=0$ coincides with the expression $ \dfrac{\ka An}{\ka -1}$ available in the above estimate. This results in (\ref{>1}) with $n+1$ substituted for $n$, thus in the estimate (\ref{>1}) for any $n$.

\end{proof}

%\hfill $\mbox$

It is known that \cite{D93},  \cite{KST06}
\begin{equation}
\label{MLAS}
E_\al(s) \sim \frac{1}{\al}e^{s^{\frac{1}{\al}}},\;\;\;s\to \infty.
\end{equation}
By (\ref{>1}) and (\ref{MLAS}),
\begin{equation}
\label{SUPER}
k_t^{(n)}(x_1,\dots,x_n) \leq   M\; C^n n!(n-1)!\frac{q^n}{q-1} e^{[n(\ka -1)]^{\frac{1}{\al}}t},
\end{equation}
$$
q=\frac{\ka A}{\ka -1}, \;\;\; M>0.
$$

\begin{remark}
The case $\varkappa > 1$ corresponds to the super-critical regime in the contact model.
As in the case of Markov contact model, we may expect that  bound (\ref{SUPER}) is exact, i.e., a similar
kind of the lower estimate  will be valid; see \cite{FKK09}.  For a given $n\in \N$  this bound means
an exponential growth of correlation functions for $t\to\infty$ as in the Markov case.   But for a given $t>0$
we observe much stronger growth of correlations w.r.t. $n\in\N$.

\end{remark}

\begin{proposition}
If $\ka <1$, then
\begin{multline}
\label{<1}
k_t^{(n)}(x_1,\dots,x_n) \leq   C^n n! E_\al (-n(1 - \ka) t^\al)+
C^{n-1} n! \\ \times \sum_{j=0}^{n-2} \left(\frac{\ka A}{1-\ka}\right)^{j+1} \frac{(n-1)(n-2)\dots (n-j-1)}{(j+1)!}
E_\al(-(n-j-1)(1-\ka) t^\al).
\end{multline}
The second term is absent for $n=1$.

\end{proposition}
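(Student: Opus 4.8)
The plan is to run an induction on $n$ that parallels, step by step, the proof of the preceding Proposition; the only structural change is that now $n(\ka-1)=-n(1-\ka)\le 0$, so $\|T^{(n)}(t)\|\le e^{n(\ka-1)t}\le 1$ and the Mittag--Leffler functions that occur carry negative arguments. Concretely, I would start from the mild-solution representation of $k_t^{(n)}$ recorded in Section 3 and from the operator bounds
$$
\|S_\al^{(n)}(t)\|\le E_\al\bigl(-n(1-\ka)t^\al\bigr),\qquad
\|P_\al^{(n)}(t)\|\le t^{\al-1}E_{\al,\al}\bigl(-n(1-\ka)t^\al\bigr),
$$
which follow from the estimates of Section 2 with $\la=n(\ka-1)$. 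The base case $n=1$ is immediate: since $f_t^{(1)}\equiv 0$ one has $k_t^{(1)}=S_\al^{(1)}(t)k_0^{(1)}$, hence $0\le k_t^{(1)}\le C\,E_\al(-(1-\ka)t^\al)$, which is (\ref{<1}) for $n=1$.

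For the inductive step I would assume (\ref{<1}) for some $n\ge 1$ and insert the bound for $k_\tau^{(n)}$ into the convolution term of the mild solution for $k_t^{(n+1)}$. Using $a\le A$ pointwise, so that $\sum_{j\neq i}a(\cdot_i-\cdot_j)\le nA$, and counting the $n+1$ choices of $i$, one gains a constant factor $\ka(n+1)nA$ in front of $\|k_\tau^{(n)}\|_\infty$; together with $\|P_\al^{(n+1)}(t-\tau)\|\le (t-\tau)^{\al-1}E_{\al,\al}(-(n+1)(1-\ka)(t-\tau)^\al)$ and the $n$ summands of (\ref{<1}) --- each a constant multiple of $E_\al(-m(1-\ka)\tau^\al)$ for some $m\in\{1,\dots,n\}$ --- the whole step reduces to convolution integrals of the type in (\ref{Dj}). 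Applying (\ref{Dj}) with $z=-(n+1)(1-\ka)$ and $\la=-m(1-\ka)$ gives
\begin{multline*}
\int_0^t (t-\tau)^{\al-1}E_{\al,\al}\bigl(z(t-\tau)^\al\bigr)E_\al\bigl(\la\tau^\al\bigr)\,d\tau \\
=\frac{E_\al(-m(1-\ka)t^\al)-E_\al(-(n+1)(1-\ka)t^\al)}{(n+1-m)(1-\ka)}
\le\frac{E_\al(-m(1-\ka)t^\al)}{(n+1-m)(1-\ka)},
\end{multline*}
where the last inequality uses that $E_\al$ is nondecreasing on $\R$ (equivalently, $E_\al(-\,\cdot\,)$ is completely monotone for $0<\al<1$), so that the subtracted term is nonnegative. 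Note that $(n+1-m)(1-\ka)$ equals $(j+2)(1-\ka)$ exactly when $m=n-j-1$; this extra factor $1/(j+2)$ is precisely what promotes $1/(j+1)!$ to $1/(j+2)!$ along the recursion.

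It then remains to collect the pieces. The $S_\al^{(n+1)}$-term contributes $C^{n+1}(n+1)!\,E_\al(-(n+1)(1-\ka)t^\al)$, i.e. the leading term of (\ref{<1}) written for $n+1$. The leading ($m=n$) summand of the inductive hypothesis yields, after convolution, $C^{n}(n+1)!\,\frac{\ka An}{1-\ka}\,E_\al(-n(1-\ka)t^\al)$, which is the $j=0$ summand of (\ref{<1}) for $n+1$. The $j$-th summand of the hypothesis ($m=n-j-1$, $j=0,\dots,n-2$) yields, after the change $i=j+1$ of the summation index, the $i$-th summand ($i=1,\dots,n-1$) of (\ref{<1}) for $n+1$, with a constant $C^{n-1}(n+1)!$ that one majorizes by $C^{n}(n+1)!$ using $C\ge 1$. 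Summing the $1+1+(n-1)=n+1$ contributions gives (\ref{<1}) with $n$ replaced by $n+1$, which closes the induction.

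The step I expect to demand the most care is this constant bookkeeping: checking that the $i=0$ summand of the bound for $n+1$ --- absent from the hypothesis and produced by the convolution of its leading term --- comes out with exactly the coefficient $\ka An/(1-\ka)$; that the product $(n-1)(n-2)\cdots(n-j-1)$, once multiplied by the extra factor $n$ and re-indexed by $i=j+1$, equals $[(n+1)-1][(n+1)-2]\cdots[(n+1)-i-1]$; and that $1/\bigl((j+1)!\,(j+2)\bigr)=1/(i+1)!$. Everything else is either classical --- positivity of $E_{\al,\al}(-x)$ and $E_\al(-x)$ for $x\ge 0$, and monotonicity of $E_\al$ --- or already available in Sections 2--4: the mild-solution formula, the bounds for $S_\al^{(n)}$ and $P_\al^{(n)}$, and the identity (\ref{Dj}).
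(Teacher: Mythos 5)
Your proposal is correct and follows essentially the same route as the paper: induction on $n$ via the mild-solution formula, the identity (\ref{Dj}) with negative arguments combined with the monotone decrease of $s\mapsto E_\al(-s)$ to drop the subtracted term, and the index shift $i=j+1$ with the factor $1/(j+2)$ from $(n+1-m)(1-\ka)$ upgrading $1/(j+1)!$ to $1/(j+2)!$. The constant bookkeeping you flag (including absorbing $C^{n-1}$ into $C^{n}$ via $C\ge 1$) is exactly what the paper does, so nothing is missing.
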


\begin{proof}
We proceed by induction, using the fact that by (\ref{Dj}) for $m<n$,

$$
\int_0^t (t-\tau)^{\al -1} E_{\al,\al}(-n(1-\ka)(t-\tau)^\al) E_\al(-m(1-\ka)\tau^\al) d\tau=
$$
$$
\frac{E_\al(-m(1-\ka) t^\al) - E_\al(
-n(1-\ka)t^\al) }{(n-m)(1-\ka)} \leq \frac{1}{(n-m)(1-\ka)} E_\al (-m(1-\ka)t^\al)
$$
(note that the function $s\mapsto E_\al(-s)$ is monotonically decreasing, see \cite{D93}, page 6).

The inequality (\ref{<1}) is obvious for $n=1$. Suppose it has been proved for some value of $n$. Then
\begin{multline*}
k_t^{(n+1)}(x_1,\dots,x_{n+1}) \leq   C^{n+1} (n+1)! E_\al (-(n+1)(1 - \ka) t^\al)\\
+C^n(n+1)!\ka An\int_0^t (t-\tau)^{\al -1} E_{\al,\al}(-(n+1)(1-\ka)(t-\tau)^\al) E_\al(-n(1-\ka)\tau^\al) d\tau \\
+C^{n-1}(n+1)!\ka An\sum_{j=0}^{n-2} \left(\frac{\ka A}{1-\ka}\right)^{j+1} \frac{(n-1)(n-2)\dots (n-j-1)}{(j+1)!}\\
\times \int_0^t (t-\tau)^{\al -1} E_{\al,\al}(-(n+1)(1-\ka)(t-\tau)^\al) E_\al(-(n-j-1)(1-\ka)\tau^\al) d\tau \\
\le C^{n+1} (n+1)! E_\al (-(n+1)(1 - \ka) t^\al)+C^n (n+1)! \frac{\ka An}{1-\ka}E_\al (-n(1 - \ka) t^\al)\\
+C^n (n+1)! \frac{\ka An}{1-\ka}\sum_{j=0}^{n-2}\Biggl[ \left(\frac{\ka A}{1-\ka}\right)^{j+1} \frac{(n-1)(n-2)\dots (n-j-1)}{(j+2)!} \\
\times E_\al(-(n-j-1)(1-\ka) t^\al)\Biggr].
\end{multline*}

As in the proof of Proposition 4.1, we make the change of index $i=j+1$ and note that the "missing'' summand corresponding to $i=0$ is available in our estimate of $k_t^{(n+1)}$. This proves (\ref{<1}) for any $n$.

\end{proof}

It is known that \cite{D93},  \cite{KST06}
\begin{equation}
\label{Power}
E_\al (-s)\sim \frac1{\Gamma (1-\alpha)}s^{-1},\quad s\to +\infty .
\end{equation}
By (\ref{<1}) and (\ref{Power}), for $t\ge 1$,
\begin{equation}
\label{SUB}
k_t^{(n)}(x_1,\dots,x_n) \leq   MC^n n!(n-1)!\left( \frac{\ka A}{1-\ka }\right)^nt^{-\al},\quad M>0.
\end{equation}

\begin{remark}

For $\varkappa < 1$ we have the sub-critical regime
and in the Markov model correlation functions have exponential decay to zero for $t\to\infty$ \cite{KKP08}.
For $\al\in (0,1)$
and for given $n\in \N$ the bound (\ref{SUB}) is quite different: we have only $t^{-\al}$ decay that
demonstrates an essential effect of the memory on the rate of asymptotic degeneration of the system.

\end{remark}
\begin{proposition}
If $\ka =1$, then

\begin{multline}
\label{=1}
k_t^{(n)}(x_1,\dots,x_n) \leq   C^n n! \\
+ \al^{-1} C^{n-1} n! \sum_{j=0}^{n-2}
\frac{A^{j+1}}{(j+1)\Ga((j+1)\al)} (n-1)(n-2)\dots (n-j-1) t^{(j+1)\al}.
\end{multline}
The second term is absent for $n=1$.
\begin{proof}

As in the previous propositions, we proceed by induction, this time based on the identity
$$
\int_0^t (t-\tau)^{\al-1}\tau^{\beta -1} d\tau =
\frac{\Ga(\al)\Ga(\beta)}{\Ga(\al + \beta)} t^{\al+\beta -1}.
$$

Assuming (\ref{=1}) we find that
\begin{multline*}
k_t^{(n+1)}(x_1,\dots,x_{n+1}) \leq   C^{n+1} (n+1)! +C^n(n+1)!\frac{An}{\Ga (\al )}\int\limits_0^t (t-\tau )^{\al -1}d\tau \\
+C^{n-1}(n+1)!\frac{An}\alpha \sum_{j=0}^{n-2}
\frac{A^{j+1}}{(j+1)\Ga((j+1)\al)} (n-1)(n-2)\dots (n-j-1)\\
\times \frac1{\Ga (\al )}\int_0^t (t-\tau)^{\al-1}\tau^{(j+1)\al }d\tau .
\end{multline*}
Computing the integrals, using the identity $\Ga (1+z)=z\Ga (z)$ and making a change of the summation index, we obtain the required estimate of $k_t^{(n+1)}$.
\end{proof}
\end{proposition}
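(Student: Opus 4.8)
The plan is to proceed exactly as in the proofs of Propositions 4.1 and 4.2: fix $n$, assume the bound (\ref{=1}) holds, and use the mild-solution representation together with the estimates $\|S_\al^{(n)}(t)\|\le E_\al(n(\ka-1)t^\al)$ and $\|P_\al^{(n)}(t)\|\le t^{\al-1}E_{\al,\al}(n(\ka-1)t^\al)$. When $\ka=1$ the factor $n(\ka-1)$ vanishes, so $E_\al(0)=1$ and $E_{\al,\al}(0)=1/\Ga(\al)$; hence $\|S_\al^{(n)}(t)\|\le 1$ and $\|P_\al^{(n)}(t)\|\le t^{\al-1}/\Ga(\al)$. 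Applying these to the representation for $k_t^{(n+1)}$, the first summand $S_\al^{(n+1)}(t)k_0^{(n+1)}$ contributes at most $C^{n+1}(n+1)!$, and the forcing term $f_t^{(n+1)}$ contributes a time convolution of $t^{\al-1}/\Ga(\al)$ against the inductive bound for $k_\tau^{(n)}$, with the prefactor $\ka\sum_i\sum_{j\ne i}a(\cdot)\le An$ (here $A=\max\{1,\sup a\}$, and $n$ counts the pairs after summing).

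The key computational input replacing the Djrbashian identity (\ref{Dj}) is the Beta-function identity already quoted in the statement,
$$
\int_0^t (t-\tau)^{\al-1}\tau^{\beta-1}\,d\tau=\frac{\Ga(\al)\Ga(\beta)}{\Ga(\al+\beta)}\,t^{\al+\beta-1}.
$$
The constant term of the inductive hypothesis (the ``$C^n n!$'' piece, i.e.\ $\beta-1=0$, $\beta=1$) produces $\int_0^t(t-\tau)^{\al-1}\,d\tau=t^\al/\al$; after multiplying by $An/\Ga(\al)$ this yields the ``missing'' $j=0$ summand $\dfrac{C^n(n+1)!\,An}{\Ga(\al)}\cdot\dfrac{t^\al}{\al}=\dfrac{A\,n\,t^\al}{\al\Ga(\al)}\,C^n(n+1)!=\dfrac{A\,n\,t^\al}{\Ga(1+\al)}\,C^n(n+1)!$, which matches the $j=0$ term ($(j+1)\Ga((j+1)\al)=\Ga(\al)$, times the extra $1/\al$) in the target bound (\ref{=1}) for $n+1$. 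Each summand indexed by $j$ in the hypothesis, carrying $\tau^{(j+1)\al}$, gives via the identity a factor $\Ga(\al)\Ga((j+1)\al+1)/\Ga((j+2)\al+1)\cdot t^{(j+2)\al-1}$; combining the Gamma factors with $\Ga(1+z)=z\Ga(z)$ converts $\dfrac{\Ga((j+1)\al+1)}{\Ga((j+2)\al+1)}=\dfrac{(j+1)\al\,\Ga((j+1)\al)}{(j+2)\al\,\Ga((j+2)\al)}=\dfrac{(j+1)\Ga((j+1)\al)}{(j+2)\Ga((j+2)\al)}$, so that the weight $\dfrac{A^{j+1}}{(j+1)\Ga((j+1)\al)}$ is carried to $\dfrac{A^{j+2}}{(j+2)\Ga((j+2)\al)}$ once the extra factor $A$ and the $1/\Ga(\al)$ from $P_\al$ are absorbed, and the polynomial factor picks up an extra $n$, becoming $n(n-1)\cdots(n-j-1)$.

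Finally I would perform the index shift $i=j+1$ exactly as in the earlier proofs: the tail sum over $j=0,\dots,n-2$ becomes a sum over $i=1,\dots,(n+1)-2$ with $A^{i+1}/((i+1)\Ga((i+1)\al))$, polynomial $((n+1)-1)((n+1)-2)\cdots((n+1)-i-1)$, and time power $t^{(i+1)\al}$, while the $i=0$ term is supplied by the constant-piece contribution computed above. Collecting, one reads off (\ref{=1}) with $n+1$ in place of $n$, and since the base case $n=1$ is trivial (the forcing term $f^{(1)}\equiv 0$, so $k_t^{(1)}\le C\cdot 1!$) the induction closes. I do not anticipate a genuine obstacle here; the only point requiring care is the bookkeeping of the Gamma-function manipulations so that the exact constants $\al^{-1}$ and $(j+1)\Ga((j+1)\al)$ in (\ref{=1}) emerge correctly, together with checking that the polynomial prefactor indeed acquires precisely one extra linear factor at each step.
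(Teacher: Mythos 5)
Your proposal is correct and follows essentially the same route as the paper: induction using the mild-solution representation with the contraction bounds $\|S_\al^{(n)}(t)\|\le 1$, $\|P_\al^{(n)}(t)\|\le t^{\al-1}/\Ga(\al)$ (valid since $n(\ka-1)=0$), the Beta-function identity, the recursion $\Ga(1+z)=z\Ga(z)$, and the index shift $i=j+1$. Your Gamma-function bookkeeping fills in exactly the computation the paper summarizes in its final sentence.
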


\addcontentsline{toc}{section}{References}

\end{document}